\newtheorem{theorem}{Theorem}
\newtheorem{definition}[theorem]{Definition}
\newcommand{\be}{\begin{equation}}
\newcommand{\ee}{\end{equation}}
\newcommand{\ben}{\begin{equation*}}
\newcommand{\een}{\end{equation*}}
\newcommand{\ba}{\begin{eqnarray}}
\newcommand{\ea}{\end{eqnarray}}
\newcommand{\suppress}[1]{}
\begin{document}

\renewcommand{\L}{{\mathcal L}}

\newcommand{\1}{{\bf 1}}
\newcommand{\Z}{{\mathds Z}}
\newcommand{\dis}{{\mathsf{dis}} \,}
\newcommand{\ep}{\epsilon}
\newcommand{\vep}{\varepsilon}

\newcommand{\bA}{\mathsf{A}}
\newcommand{\bC}{\mathsf{C}}
\newcommand{\bG}{\mathsf{G}}
\newcommand{\bT}{\mathsf{T}}

\newcommand{\B}{{\mathcal B}}
\newcommand{\C}{{\mathcal C}}

\newcommand{\G}{{\mathcal G}}
\renewcommand{\H}{{\mathcal H}}
\newcommand{\D}{{\mathcal D}}

\newcommand{\Dr}{\mathcal{DR}}
\newcommand{\dof}{\mathbf{D}}
\newcommand{\Dreg}{\dof}

\newcommand{\V}{{\mathcal V}}
\renewcommand{\S}{{\mathcal S}}
\newcommand{\M}{{\mathcal M}}
\newcommand{\N}{{\mathcal N}}
\newcommand{\IN}{{\mathbb N}}
\newcommand{\R}{{\mathbb R}}
\newcommand{\Rs}{{\mathcal R}}
\newcommand{\Os}{{\mathcal O}}
\newcommand{\Ps}{{\mathcal P}}
\newcommand{\K}{{\mathcal K}}
\newcommand{\W}{{\mathcal W}}
\newcommand{\vX}{{\vec{X}}}
\newcommand{\vY}{{\vec{Y}}}
\newcommand{\F}{{\mathbb F}}
\newcommand{\dE}{D_\Sigma}
\newcommand{\q}[2]{Q_{s_{#1},d_{#2}}}
\newcommand{\p}[2]{P_{s_{#1},d_{#2}}}
\newcommand{\m}[2]{M_{s_{#1},d_{#2}}}
\newcommand{\ttt}{3 \times 3 \times 3}
\newcommand{\kkk}{K \times K \times K}
\newcommand{\kk}[1]{#1 \times #1 \times #1}
\newcommand{\cT}{{\cal T}}
\newcommand{\cR}{{\cal R}}
\newcommand{\cN}{{\cal N}}
\newcommand{\cC}{{\cal C}}

\newcommand{\s}{{\bf s}}
\newcommand{\bs}{{\bf s}}
\newcommand{\bc}{{\bf c}}

\newcommand{\EP}{{\rm EP}}

\newcommand{\setx}{\{ x_{(i)}^{K} \}_M }
\newcommand{\setxM}[1]{\{ x_{(i)}^{K} \}_{#1} }

\newcommand{\setX}{\{ X_{(i)}^{K} \}_M }
\newcommand{\setXM}[1]{\{ X_{(i)}^{K} \}_{#1} }

\newcommand{\sety}{\{ y_{(i)}^{K} \}_N }
\newcommand{\setyN}[1]{\{ y_{(i)}^{K} \}_{#1} }

\newcommand{\setY}{\{ Y_{(i)}^{K} \}_N }
\newcommand{\setYN}[1]{\{ Y_{(i)}^{K} \}_{#1} }

\newcommand{\bp}{{\bf p}}
\renewcommand{\r}{{\bf r}}
\newcommand{\x}{{\bf x}}
\newcommand{\y}{{\bf y}}
\newcommand{\z}{{\bf z}}

\newcommand{\Cunc}{C_\text{unc}}

\newcounter{numcount}
\setcounter{numcount}{1}

\newcommand{\eqnum}{\stackrel{(\roman{numcount})}{=}\stepcounter{numcount}}
\newcommand{\leqnum}{\stackrel{(\roman{numcount})}{\leq\;}\stepcounter{numcount}}
\newcommand{\geqnum}{\stackrel{(\roman{numcount})}{\geq\;}\stepcounter{numcount}}
\newcommand{\cnt}{$(\roman{numcount})$ \stepcounter{numcount}}
\newcommand{\rescnt}{\setcounter{numcount}{1}}

\newcommand{\cov}{{\rm \text{coverage}}}
\newcommand{\recost}{{\rm \text{reordering~cost}}}

\newcommand{\Bernoulli}{{\rm Bernoulli}}
\newcommand{\Geometric}{{\rm Geometric}}
\newcommand{\TPC}{{\rm TPC-LP}}
\newcommand{\U}{\overline{U}}
\newcommand{\qc}{\tilde{q}}

\newif\iflong
\longtrue

\newif\ifdraft
\drafttrue

\newcommand{\iscomment}[1]{
\ifdraft
{\color{blue} \bf{{{{IS --- #1}}}}}
\else
\fi}
\newcommand{\adicomment}[1]{
\ifdraft
{\color{red} \bf{{{{ANR --- #1}}}}}
\else
\fi
}

\title{Fundamental Limits of Non-Adaptive Group Testing With Markovian Correlation} 

\author{%
  \IEEEauthorblockN{Aditya~Narayan~Ravi}
  \IEEEauthorblockA{
                    University of Illinois, Urbana-Champaign\\
                    anravi2@illinois.edu}
  \and
  \IEEEauthorblockN{Ilan Shomorony}
  \IEEEauthorblockA{
                    University of Illinois, Urbana-Champaign\\
                    ilans@illinois.edu}
}

\maketitle

\begin{abstract}
THIS PAPER IS ELIGIBLE FOR THE
STUDENT PAPER AWARD. 

We study a correlated group testing model where items are infected according to a Markov chain, 
which creates bursty infection patterns.
Focusing on a very sparse infections regime, we propose a  non-adaptive testing strategy with an efficient decoding scheme that is nearly optimal. 
Specifically, it achieves asymptotically vanishing error with a number of tests that is within a \(1/\ln(2) \approx 1.44\) multiplicative factor of the fundamental entropy bound---a result that parallels the independent group testing setting. 
We show that the number of tests reduces with an increase in the expected burst length of infected items, quantifying the advantage of exploiting correlation in test design.
\end{abstract}

\section{Introduction}

Consider a \emph{correlated group testing} problem where the infection status of an item is correlated with the infection status of the item before it. We are interested in non-adaptive group testing strategies~\cite{balding1996comparative} to detect the infected items. 

Classical group testing, introduced by Dorfman~\cite{dorfman1943detection}, often employs the combinatorial model, a widely used framework. 
In this model, a population of size \(n\) contains exactly \(k\) infected individuals, chosen uniformly at random from all possible subsets of size \(k\).
An alternative approach is the probabilistic i.i.d.~model, which assumes that each individual is independently infected with probability \(q_n\). 
Importantly, these two models are largely equivalent \cite{aldridge2019group}. 
This framework has been extensively explored in the literature~\cite{coja2020information, scarlett2017phase, scarlett2018near, coja2020optimal, aldridge2019group}. 
In this setting, when infections are sparse, group testing can dramatically reduce the number of tests required.

However, this assumption of independence is rarely reflective of real-world infection dynamics. 
Recent studies have begun to acknowledge this limitation and have explored the potential of leveraging known community structures \cite{ahn2021adaptive, nikolopoulos2021group, nikolopoulos2023community, ahn2023noisy, bertolotti2020network, arasli2023group} to enhance group testing efficiency.
Moreover, in many practical scenarios, infections are not uniformly distributed but tend to occur in bursts \cite{colbourn1999group}.
For example, people living in the same living space or neighborhood are more likely to get infected and tested together. 
Such bursty infection patterns are naturally captured by correlated infection models.

Another important motivation for our work arises from imaging-based spatial transcriptomics technologies such as MERFISH~\cite{moffitt2016rna} and Xenium~\cite{janesick2023high}. 
These technologies use 
barcoded ``probes'' to simultaneously test the expression of a group of genes. 
If any of the genes tested are being expressed in a certain location on a tissue, that location lights up when an image is captured under a fluorescence microscope.
The design of these probes can be cast as a group testing problem, where we can target multiple genes during each testing/imaging round.
 Furthermore, the genes that are tested often exhibit strong local dependencies, where genes involved in the same regulatory pathways or located near each other on a chromosome tend to activate together \cite{allocco2004quantifying}, once again resulting in bursty patterns. 
 This ``co-regulation'' has been exploited previously  \cite{cleary2021compressed}, with correlated genes being measured together to reduce the number of tests required. 
Understanding correlated group testing can thus provide insights into how an optimal design 
 of these probes could take advantage of gene co-regulation, reducing the number of testing rounds.
Note that once the probes are designed and introduced in spatial transcriptomics experiments, modifying or adapting the experimental setup is infeasible. This further motivates the need for non-adaptive group testing strategies for this setting.

Delving into the problem more formally, consider a set of \(n\) distinct items labeled \(\{1, 2, \dots, n\}\). 
Each item \(i\) is assigned a binary random variable \(U_i \in \{0,1\}\), where \(U_i = 1\) indicates that item \(i\) is infected, and \(U_i = 0\) otherwise. 
The collective infection status of all items is represented by the \emph{infection vector} \(U^n = (U_1, U_2, \dots, U_{n})\).
The objective is to estimate \(U^n\) using the outcomes of \(T\) non-adaptive group tests. 
Each test simultaneously tests a subset of items. 
A test returns a positive result if at least one infected item is included in the group; otherwise, the result is negative. 

To build intuition, consider the classical independent group testing problem, where each item in a population of size \(n\) is independently infected with probability  \(q_n\). 
This problem is categorized into three regimes \cite{aldridge2019group} based on the expected number of infected items \(nq_n\). In the \emph{very sparse regime} \cite{aldridge2019group, scarlett2018near}, the expected number scales slower than any polynomial in \(n\)  (e.g.,  \( nq_n \sim O(\log n)\) or $O(1)$). 
In the \emph{sparse regime} \cite{coja2020optimal, scarlett2017phase}, it grows sub-linearly, but faster than logarithmic rates (e.g., \(nq_n \sim O(n^\theta), \theta < 1\)). 
In the \emph{dense regime}, the number of infected items scales linearly with \(n\), i.e.,  $q_n \in (0,1)$ is a constant.
Group testing is known to provide significant gains in the very sparse and sparse regimes, where the number of infections are relatively low \cite{aldridge2019group}.

Consider a very sparse regime,
where \(U_i\)'s are independent and distributed as \(\text{Bernoulli}(q_n)\), where \(q_n = k \log (n)/n\) for some constant \(k > 0\). 
Now, consider testing strategies that employ 
\emph{per-item} decoding  (which is sometimes referred to as separate decoding of items~\cite{aldridge2019group}). 
In this approach, the infection status of each item is determined solely based on the information of which tests contain that item and the test outcomes.
In contrast, joint decoding \cite{aldridge2019group, scarlett2017phase, coja2020information} considers the infection status of all items simultaneously, leveraging the entire set of items tested in each test to jointly decode the infection status of the items.
It can be shown that a \emph{per-item} testing and decoding scheme \cite{malyutov1980planning, scarlett2018near} can achieve a vanishing probability of error as \(n \to \infty\) if the number of tests \(T(n)\) satisfies
\begin{align} \label{eq:iidbound}
    T(n) > \frac{nq_n\log{n}}{\ln{2}}.
\end{align}
To achieve this bound, the testing matrix (which specifies which items are included in each test) can be designed in a random fashion, by including each item in each test independently, with a probability \(p = \nu/nq_n\) \cite{scarlett2018near}, where \(\nu\) is a constant.
Moreover \(\nu\) is optimized and set to \(\nu = \ln(2)\) to obtain the bound.
This is followed by Maximum Likelihood (ML) decoding, where each item is decoded independently based on the ML rule. 
Surprisingly, this simple scheme is optimal (under \emph{per-item} decoding) in the independent case \cite{scarlett2018near}.


In contrast, the correlated group testing problem presents a significantly greater challenge compared to the independent case. 
Independent testing strategies become suboptimal in this setting because they ignore the information the infection status of an item provides about other correlated items. 
As a result, applying independent testing in correlated settings fails to reduce the number of tests. 
Prior work~\cite{ahn2021adaptive, ahn2023noisy, nikolopoulos2021group, nikolopoulos2023community} has introduced well-designed \emph{adaptive} strategies that effectively exploit these correlations to reduce the number of required tests. 
However, developing non-adaptive strategies that achieve similar improvements remains difficult. 
For example, \cite{nikolopoulos2023community} proposes a non-adaptive method that reduces the number of tests but only when a false positive error is non-vanishing, 
highlighting the complexity of designing such strategies for correlated cases.

To make progress in the non-adaptive correlated setting,
wWe consider the case where the infection vector \(U^n\) is generated by a two-state Markov chain, 
shown in Figure~\ref{fig:grouptesting}.
For relevant parameter choices, this Markov chain induces bursty infection patterns.
Specifically, define transition probabilities \(\alpha_n = k^{\prime}\log(n)/n \) and \(\beta < 1\) (a constant). 
The parameter \(\beta \) is the probability  of the infection process leaving a burst of infected items and the expected burst length is \(1/\beta\).
The marginal infection probability remains \(q_n = k \log(n)/n\), consistent with the very sparse regime.

\begin{figure}[ht]
\vspace{-3mm}
\centering
\includegraphics[width=0.25\textwidth]{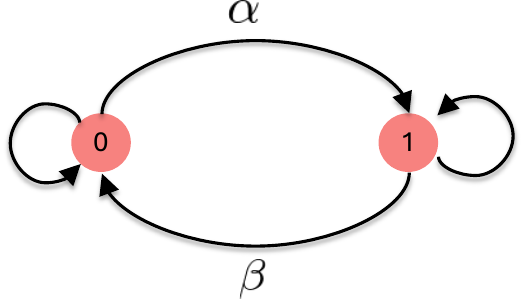}
\caption{
 The \(2\)-state Markov chain that generates the infection vector \(U^n\) }\label{fig:grouptesting}
\end{figure}

In this Markovian setting, we design a near optimal non-adaptive \emph{per-item} testing and decoding strategy, which explicitly leverages the underlying correlation structure to minimize the number of tests. 
We show that this correlation-aware design achieves a vanishing probability of error when the number of tests satisfies
\begin{align}
    T(n) > \frac{\beta\, nq_n\log{n}}{\ln{2}}.
\end{align}
The key difference with respect to (\ref{eq:iidbound}) is the factor \(\beta\), reflecting the advantage of leveraging correlation between items. 
Specifically, the number of tests in the Markovian setting is inversely proportional to the expected burst length.




Our approach utilizes a two-stage approach to create the (non-adaptive) test matrix: a first stage that selects blocks to be included in each test, and a second stage that randomly samples items from the selected blocks to be included in the tests.
Central to this strategy is our efficient  \emph{per-item} decoder, which operates by first performing an initial screening that considers whether a given item is part of a non-infected tested group.
If it is not, we follow up by applying a threshold on the number of times that item is tested to estimate its infection status. 
Surprisingly, this simple decoding rule leverages the testing information and provably reduces the number of tests required, achieving near-optimal performance with low decoding complexity.

Moreover, in the Markovian setting, a fundamental general result in group testing \cite{wolf1985born} implies that for any group testing strategy and any decoding strategy achieving a vanishing probability of error (including joint decoding schemes), the number of tests \(T\) must satisfy the fundamental lower bound
\begin{align}
    T \ge H(U^n) = \beta nq_n \log{n} + o(nq_n \log{n}),
\end{align}
where \(H(\cdot)\) is the entropy.
Our proposed testing and decoding scheme achieves close to this bound, requiring only a multiplicative factor of \(1/\ln{2}\) more tests. 
Similar to the independent case, this demonstrates that our strategy is not only simple and efficient but also near-optimal, as it achieves a test count within a constant factor of the theoretical minimum. 
Our main contributions are the following:

\begin{itemize}
\item We introduce a correlated group testing setting, where correlations between items are Markovian.

    \item We introduce a novel non-adaptive randomized block testing strategy tailored to the Markovian setting.

    \item We introduce a simple \emph{per-item} thresholding-based decoding strategy that is computationally efficient.

    \item We prove that the testing and per-item decoding strategy achieves within a multiplicative factor of \(1/\ln{2}\) of the information-theoretic lower bound.

    

\end{itemize}

\noindent \textbf{Related Work:}
Group testing has been extensively studied under uniform or i.i.d.~priors \cite{aldridge2019group, scarlett2017phase, scarlett2018near, colbourn1999group, berger2002asymptotic}, as well as in heterogeneous populations with non-identical infection probabilities \cite{hwang1975generalized, li2014group, kealy2014capacity}. 
Several works have extended classical models to incorporate infection bursts~\cite{colbourn1999group, li2023finding, lin2012synthetic, muller2004consecutive, bui2021improved}, typically by imposing deterministic constraints on factors like burst length and number of bursts.
Our proposed setting can be seen as a probabilistic alternative to these burst models.



Correlations between the infection status of items have also been introduced in the context of community-aware group testing.
This setting leverages population structure by assuming the individuals can be partitioned into disjoint families or by employing stochastic block models with correlation-driven infection patterns~\cite{nikolopoulos2021group, nikolopoulos2023community, ahn2021adaptive, ahn2023noisy}.
Related approaches, such as those in \cite{bertolotti2020network, arasli2023group, lau2022model}, incorporate subgroup structures, while our approach utilizes a Markovian framework that operates without predefined partitions.
Network structure is also present in graph-constrained group testing but, in this setting, the network imposes constraints on the test design~\cite{harvey2007non, cheraghchi2012graph, karbasi2012sequential}.


Adaptive strategies for group testing are explored in \cite{ahn2021adaptive, ahn2023noisy, nikolopoulos2021group, nikolopoulos2023community}. 
Notably, non-adaptive methods such as the one proposed in \cite{nikolopoulos2023community} demonstrate reductions in the number of required tests, albeit with non-vanishing false positive rates.




\section{Problem Setting}\label{sec:ProblemSetting}

We consider a correlated group testing framework where \(n\) distinct items, labeled \(\{1, 2, \dots, n\}\), are each assigned a binary label \(U_i \in \{0, 1\}\). 
We impose a Markovian structure on \(P(U^n)\), specifically a 2-state Markov chain with state space \(S = \{0, 1\}\) and transition probabilities \(p_{0,1} = \alpha\) and \(p_{1,0} = \beta\) as in Figure~\ref{fig:grouptesting}. We generate the infection random vector \(U^n\) by initializing the Markov chain in its steady state 
\[
(q, 1-q) := \left(\frac{\alpha}{\alpha + \beta}, \frac{\beta}{\alpha + \beta}\right),
\]
and then recording \(n\) consecutive states.

Let \(T := T(n)\) be the number of tests to be conducted. For fixed \(n\) and \(T\), a testing strategy is encoded by a matrix \(\mathsf{X} \in \{0,1\}^{T \times n}\). Specifically, \(X_{t,i} = 1\) if item \(i\) is included in test \(t\), and \(X_{t,i} = 0\) otherwise.
We observe the test outcomes in a vector \(Y = [Y_1,\dots,Y_T]\), where each entry \(Y_t\) follows
\[
    Y_{t} = 
    \begin{cases}
       1 & \text{if }\exists\, i \in \{1,\dots,n\} : U_i = 1 \text{ and } X_{t,i} = 1,\\
       0 & \text{otherwise}.
    \end{cases}
\]
Equivalently, 
\[
Y_t = \mathbf{1} \left\{\sum_{i=1}^n U_i X_{t,i} > 0\right\},
\] 
with \(\mathbf{1}\{\cdot\}\) denoting the indicator function. A decoding rule then maps \(\{0,1\}^{T \times n} \times \{0,1\}^{T}\) to an estimate \(\hat{U}^n \in \{0,1\}^n\). Let 
\(
\mathcal{E} = \{U^n \neq \hat{U}^n\}
\)
denote the error event.



Our goal is to design a sequence of testing matrices and decoding rules for which $\Pr(\mathcal{E}) \to 0$ as $n \to \infty$.
As suggested by the i.i.d.~result in (\ref{eq:iidbound}), $T(n)$ will need to scale as $n q_n \log n$. 
In the Markov setting that we consider, $q_n = \alpha_n/(\alpha_n+\beta)$ is the stationary probability of an infection.
Based on this scaling, we define an achievable testing rate as follows:

\begin{definition}
A testing rate \( \tau\) is achievable if there exists a sequence of test matrices 
\(\mathsf{X}_n \in \{0,1\}^{T(n) \times n}\) and corresponding decoding rules $g_n$ such that
\[\tau = \lim_{n\to \infty} \frac{T(n)}{n q_n \log n}\]
and $\Pr(\mathcal{E}) \to 0$ as $n \to \infty$.
Moreover, we let \(\tau^{*}\) be the infimum of  all achievable rates $\tau$ and refer to it as the minimal group testing rate.
\end{definition}






Following the terminology in the group testing literature, we focus on the \emph{very sparse regime}~\cite{aldridge2019group}, 
which corresponds to
\begin{align}\label{eq:scaling}
    nq_n = o\bigl(n^\theta\bigr) \quad \text{for every } \theta > 0.
\end{align}
This condition includes all growth rates strictly slower than any sub linear function of \(n\), such as logarithmic or iterated-logarithmic functions. 
For concreteness, we restrict our analysis to the case where \(\alpha_n\) is defined as
\begin{align}
    \alpha_n =  \frac{k^{\prime}\log n}{n},
\end{align}
for some constant \(k^{\prime}\). 
We can then compute \(q_n\) as
\begin{align}
    q_n = \frac{\alpha_n}{\alpha_n + \beta} = \frac{k\log{n}}{n} + o\left(\frac{\log{n}}{n}\right),
\end{align}
where 
\(k := k^{\prime}/ \beta\). 
Notice that the expected number of defectives is \(nq_n \sim O(\log{n})\). 
For simplicity, we omit lower-order terms of \(o(\log n/n)\) that are added to \(q_n\) and assume that
\[
 q_n =  \frac{k\log{n}}{n},
\]
in the rest of this work.
However, our results remain valid even when the lower-order terms are included. 


Although a natural next step is to generalize our approach to denser regimes (i.e., \(nq_n \sim O(n^\alpha)\)), a rigorous proof of such extensions is left for future work.

\noindent \textbf{Notation:} Throughout the paper, \( \log(\cdot) \) represents the logarithm in base \( 2 \). For functions \( a(n) \) and \( b(n) \), we say \( a(n) = o(b(n)) \) if \( a(n)/b(n) \to 0 \) as \( n \to \infty \). Similarly, we say \( a(n) = O(b(n)) \) if there exists a constant \( C > 0 \) and an integer \( n_0 \) such that 
$|a(n)| \leq C |b(n)|$  for all $n \geq n_0$.

\section{Main Results}\label{sec:MainResults}

Our main results
include a converse bound (adapted from prior work) and an achievability result for our proposed strategy.
The following result, adapted from~\cite{wolf1985born},
establishes a fundamental lower bound on the number of tests required for any group testing strategy with a vanishing error probability.

\begin{theorem}[Converse]
\label{thm:converse}
Under the Markovian correlation model, the minimal group testing rate \(\tau^{*}\) satisfies
\begin{align}
    \tau^{*} \geq \beta,
\end{align}
where \(\beta\) 
is the inverse of the expected infection burst length. 
\end{theorem}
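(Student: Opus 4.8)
The plan is to derive the rate bound from the information-theoretic inequality $T \ge H(U^n)$ and then evaluate the entropy of the Markov source asymptotically. For the first step I would invoke the converse of~\cite{wolf1985born}: since the outcome vector $Y \in \{0,1\}^T$ is a deterministic function of $U^n$ once the matrix $\mathsf{X}_n$ is fixed, we have $I(U^n;Y) \le H(Y) \le T$. Writing $H(U^n) = I(U^n;Y) + H(U^n \mid Y)$ and applying Fano's inequality to the decoder $\hat U^n = g_n(\mathsf{X}_n, Y)$ bounds the residual term $H(U^n \mid Y)$, so that $T \ge H(U^n) - 1 - n\Pr(\mathcal{E})$. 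Dividing by $n q_n \log n$ reduces the theorem to showing that $H(U^n)/(n q_n \log n) \to \beta$ while the Fano correction is negligible.

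The substantive computation is the entropy of the chain. By stationarity, $H(U^n) = H(U_1) + (n-1)\,H(U_2 \mid U_1)$, and conditioning on the previous state gives $H(U_2 \mid U_1) = q_n H_b(\beta) + (1-q_n)H_b(\alpha_n)$, where $H_b(\cdot)$ is the binary entropy. The first term is asymptotically irrelevant: $q_n \to 0$ while $H_b(\beta)$ is a fixed constant, so $(n-1)q_n H_b(\beta) = O(\log n)$. The entropy is thus dominated by the transitions out of the healthy state. Using $H_b(\alpha_n) = \alpha_n \log(1/\alpha_n) + O(\alpha_n)$ together with $\log(1/\alpha_n) = \log n\,(1 - o(1))$, which follows from $\alpha_n = k'\log n / n$, I obtain $H(U^n) = n\,\alpha_n \log n\,(1 + o(1))$. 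The final step exploits the clean identity $\alpha_n / q_n = \alpha_n + \beta$, so that $H(U^n)/(n q_n \log n) \to \alpha_n + \beta \to \beta$, recovering the expression $H(U^n) = \beta n q_n \log n + o(n q_n \log n)$ quoted earlier. Combining this with the converse and passing to the limit over any achievable sequence yields $\tau^* \ge \beta$.

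I expect the main obstacle to be the control of the Fano correction rather than the entropy evaluation, which is routine once the $H_b(\alpha_n)$ term is isolated. The difficulty is that $n q_n \log n \sim k(\log n)^2$ grows only poly-logarithmically, so dividing the term $n\Pr(\mathcal{E})$ by it leaves $\Pr(\mathcal{E})/(q_n \log n)$, which vanishes only if $\Pr(\mathcal{E}) = o\!\big((\log n)^2/n\big)$ rather than under the mere requirement $\Pr(\mathcal{E}) \to 0$ in the definition of achievability. Closing this gap is precisely what the cited converse provides: by restricting the Fano argument to the typical set of $U^n$, one replaces the crude factor $\log 2^n = n$ with $(1+o(1))H(U^n)$, so the correction becomes $\Pr(\mathcal{E})\cdot H(U^n)$ and any vanishing error probability suffices. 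With that refinement in place, the limiting argument goes through unchanged.
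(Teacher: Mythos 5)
Your proposal is correct, and its substantive part is the same argument as the paper's: both reduce the theorem to the inequality $\tau^{*} \geq \lim_{n\to\infty} H(U^n)/(n q_n \log n)$ and then evaluate the entropy identically---chain rule plus stationarity gives $H(U^n) = H(U_1) + (n-1)H(U_2\mid U_1)$, the $q_n H_b(\beta)$ term contributes only $O(\log n)$, and the $(1-q_n)H_b(\alpha_n) = \alpha_n \log n\,(1+o(1))$ term dominates, yielding $H(U^n) = k'\log^2 n\,(1+o(1))$ and hence the limit $k'/k = \beta$ (your identity $\alpha_n/q_n = \alpha_n+\beta$ is an equivalent way to finish). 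The genuine difference is the first step: the paper takes the inequality as a black box from \cite{wolf1985born}, whereas you unpack it via Fano's inequality, and in doing so you correctly identify a subtlety the paper never surfaces---since $H(U^n)$ grows only as $\log^2 n$, the crude Fano correction $n\Pr(\mathcal{E})$ does \emph{not} vanish after normalizing by $n q_n \log n$ unless $\Pr(\mathcal{E}) = o(\log^2 n / n)$, so the bare achievability requirement $\Pr(\mathcal{E}) \to 0$ is insufficient for the naive argument. Your typical-set remedy is the right one, but note it shifts rather than eliminates the burden: the atypical contribution to $H(U^n \mid Y)$ is still bounded only by $n\Pr(U^n \notin \mathcal{T})$, so you need $\Pr(U^n \notin \mathcal{T}) = o(\log^2 n/n)$---stronger than the $o(1)$ a generic AEP gives---for this $n$-dependent Markov source. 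This does hold here (the self-information is dominated by the number of $0\to 1$ transitions, a sum of roughly $n$ indicators with mean $k'\log n$, so a Chernoff-type bound gives polynomially small deviation probability for a fixed-width typical set), but you assert it via the citation rather than prove it; since the paper defers exactly the same point to the same citation, this is a deferral consistent with the paper's own treatment, not a gap in your reasoning.
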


The proof of Theorem~\ref{thm:converse} is
is based on an entropy bound and is considered
in the longer version of this paper~\cite{markovian_group_testing}. 

In the i.i.d.~setting, where the components of \(U^n\) are independent, the corresponding lower bound is
\begin{align}
   \tau^* \geq 1,
\end{align}
as shown in prior work \cite{wolf1985born}. Comparing these bounds demonstrates that the Markovian correlation structure should reduce the number of tests by a constant factor \(\beta\). 
Intuitively, this suggests that tests can be designed that also reduce the number of tests by this factor. 
Our main result establishes this.

\begin{theorem}[Achievability]
\label{thm:achievability}
Under the Markovian correlation model, the minimal group testing rate \(\tau^*\) satisfies
\begin{align}
    \tau^* \leq \frac{\beta}{\ln 2} \approx 1.44\beta.
\end{align}
\end{theorem}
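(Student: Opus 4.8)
The plan is to prove achievability by analyzing a two-stage randomized block construction together with the thresholding per-item decoder described in the introduction, and to show that the induced per-item error probability vanishes whenever the number of tests $T(n)$ exceeds $\beta n q_n \log n / \ln 2$, i.e.\ whenever $\tau > \beta/\ln 2$. Partition the $n$ items into consecutive blocks of some length $L$ chosen so that, with high probability, each block is entirely non-infected or contains a full burst (the natural scale is $L$ slightly larger than the expected burst length $1/\beta$, but small enough that a typical block sees at most one burst). In the first stage, each block is included in each test independently with some probability; in the second stage, conditioned on a block being selected, we subsample its items independently with an appropriately tuned inclusion probability $p$. The effective per-item inclusion probability should be calibrated, as in the i.i.d.\ case, so that the optimal constant $\nu = \ln 2$ emerges.

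The key steps, in order, are as follows. First I would set up the block structure and quantify, via the Markov transition probabilities $\alpha_n$ and $\beta$, the probability that a block is ``clean'' (all-zero) versus ``dirty'' (contains an infected item); here the stationary probability $q_n = \alpha_n/(\alpha_n+\beta)$ and the geometric burst length of mean $1/\beta$ are the governing quantities. Second, I would analyze the initial screening step of the decoder: an item is provisionally declared healthy if it ever appears in a negative test, since any test containing an infected item returns positive. This step cheaply clears the overwhelming majority of healthy items and, crucially, also tends to clear healthy items that share a block with infected neighbors, which is where the correlation is exploited. Third, for items surviving the screen, I would analyze the thresholding rule on the number of times the item is tested (equivalently, the number of positive tests it participated in), bounding both false-positive and false-negative probabilities for a single item. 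Fourth, I would take a union bound over all $n$ items, showing that with the chosen $T(n)$ the expected number of misclassified items tends to zero; the block-level failure events (a block straddling two bursts, or a burst longer than typical) must be shown to contribute only negligibly, using the geometric tail of the burst length.

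The main obstacle I anticipate is controlling the dependence introduced by the Markov chain when bounding per-item error, since the infection statuses of items within a block are correlated and the screening decision for one item is statistically coupled to its neighbors through the shared tests. In the i.i.d.\ analysis the per-item ML calculation factorizes cleanly, giving the clean $\nu = \ln 2$ optimization; here I would need to argue that, after conditioning on the block-occupancy pattern, the randomness of the testing matrix makes the relevant test counts behave like sums of independent Bernoulli indicators with the right mean, so that a Chernoff/Poisson-approximation bound still applies. The delicate part is choosing the block length $L$ and the two inclusion probabilities so that three competing requirements are simultaneously met: (i) the screening step clears a healthy item with high enough probability that the surviving-healthy-item false alarms are rare, (ii) infected items survive the screen and exceed the threshold, and (iii) the total test budget matches $\beta n q_n \log n/\ln 2$ asymptotically. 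I expect that a careful accounting of the constant multiplying $n q_n \log n$ — tracking where the factor $\beta$ comes from (the reduced effective number of independent ``burst'' objects rather than $n q_n$ individual infections) and where the $1/\ln 2$ comes from (the per-item inclusion-probability optimization) — is the crux, and that the remaining estimates are routine concentration arguments once the construction is fixed.
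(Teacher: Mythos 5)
Your plan mirrors the paper's proof in its architecture---the same two-stage block-randomized test matrix, the same screen-then-threshold per-item decoder, Chernoff for false negatives, and a union bound over items---so the route is the intended one, but two of your concrete choices would prevent it from reaching the stated constant. First, the block length: taking $L$ to be a constant ``slightly larger than $1/\beta$'' cannot give $\beta/\ln 2$. With block size $C$ the achievable rate works out to $\frac{\beta}{\ln 2} + \frac{1-\beta}{C\ln 2}$ (each burst pays an ``edge overhead'' at its block boundaries), so at $C \approx 1/\beta$ you only get roughly $\beta(2-\beta)/\ln 2$, nearly double the claimed constant for small $\beta$; the paper takes $C$ to be an arbitrarily large constant and lets $C \to \infty$ at the end. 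Relatedly, your premise that w.h.p.\ ``each block is entirely non-infected or contains a full burst'' is simply false for any constant block length: there are $\Theta(\log n)$ bursts and a constant fraction of them straddle block boundaries, so this cannot be written off via the geometric tail. Fortunately it is also unnecessary---the paper never needs it, working instead with the exact block-infection probability $\tilde{q} = 1-(1-q_n)(1-\alpha_n)^{C-1}$, so straddling is automatically accounted for.

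Second, and more seriously, the issue you correctly flag as your ``main obstacle''---that the screening decision of a healthy item is statistically coupled to its infected block-mates through shared tests---is the actual crux, and your plan leaves it unresolved; it is not dispatched by routine concentration. The conflict is quantitative: your requirement (iii) (budget $\beta n q_n \log n/\ln 2$) forces the within-block inclusion probability $p_2$ to tend to $1$ (any constant $p_2<1$ inflates the rate by a factor $1/p_2$), but then a healthy item sharing a block with a burst almost never appears in a test that excludes its infected neighbors, so it fails the screen, passes the count threshold, and is declared infected; in expectation there are $\Theta(\log n)$ such items, so the union bound cannot close. The binding case is even worse than the typical burst: among $\Theta(\log n)$ geometric bursts some have length $\Theta(\log\log n)$, and healthy items adjacent to those are essentially impossible to separate unless $1-p_2$ stays relatively large. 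A correct argument must therefore choose $p_2 \to 1$ at a carefully controlled rate and analyze healthy items in infected blocks \emph{separately}, conditioning on the full infection pattern $U^n$. Be warned that you cannot simply defer to the paper here: its Lemma~1 factorizes the false-positive probability as $f(\gamma) = p_{\mathrm{FP}}^{\gamma}$, treating the outcomes of the $\gamma$ tests containing the item as independent given only $U_1 = 0$, whereas they are independent only conditionally on all of $U^n$ (every test shares the same infection vector); the correct quantity is $\mathbb{E}\left[g(U^n)^{\gamma} \mid U_1 = 0\right]$ rather than $\mathbb{E}\left[g(U^n) \mid U_1 = 0\right]^{\gamma}$, and Jensen's inequality goes the wrong way for the needed bound. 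Identifying and correctly handling this dependence---exactly the coupling you anticipated---is where the real work of the proof lies.
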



This bound is achievable using a non-adaptive randomized block testing design and a per-item decoding rule. 
A \textit{per-item decoder} determines whether item \( i \) is infected based solely on the \( i \)-th column of the test matrix \( \mathsf{X} \) and the test outcome vector \( Y \).
The additional factor \(1 / \ln 2\) reflects the inherent inefficiency of per-item decoding compared to joint decoding.

Comparing Theorems~\ref{thm:converse} and \ref{thm:achievability}, we see that our testing strategy achieves a number of tests \(T(n)\) that is within a factor of \(1 / \ln 2\) of the fundamental limit. 
This parallels a similar result that exists for per-item decoders in the i.i.d. case \cite{scarlett2018near}.
This gap arises from the use of per-item decoding, which, while computationally efficient, sacrifices some optimality compared to joint decoding strategies. 
Nonetheless, the proposed strategy significantly improves upon naive designs that ignore the Markovian correlation structure. 

\section{Proof of Theorem~\ref{thm:achievability}}\label{sec:proof}

\subsection{Testing and Decoding Scheme}
To exploit the correlation among items, we adopt a two-stage testing procedure as shown in Figure~\ref{fig:testing}.
Partition the \(n\) items into equally sized contiguous blocks \(\{B_1, B_2, \dots, B_{n/C}\}\) with \(C\) items each. 
Here \(C\) is a fixed, large constant.
        Specifically, 
    \begin{align*}
        B_i := \{(i-1)C+1,(i-1)n/C+2, \dots, iC\}.
    \end{align*}
We adopt a two stage test. First each block \(B_i\)  is selected with probability \(p_1\). 
We then select each item within the selected blocks independently with probability \(p_2\).

More formally, each row of the test matrix $\mathsf{X}$ is independently generated as follows.
For each block \(B_\ell\), \(\ell = 1, \dots, n/C\), 
we draw independent \(\text{Bern}(p_1)\) random variables \(W_l\), which determine whether the block is selected during the coarse selection phase. 
For each item \(j = 1, \dots, n\), we also draw independent \(\text{Bern}(p_2)\) random variables \(Z_j\), which determine whether an item within a selected block is tested during the fine selection phase. 
The entries on the $i$th row of the test matrix \(X\) are then
\begin{align}
    X_{ij} = W_\ell Z_j, \quad \text{for } j \in B_\ell.
\end{align}
In this way, \(X_{ij} = 1\) if and only if block \(B_\ell\) containing item \(j\) is selected in the first selection phase (\(W_t = 1\)) and item \(j\) is independently selected in the fine selection phase (\(Z_j = 1\)).
 
\begin{figure}[ht]
\centering
\includegraphics[width=0.5\textwidth]{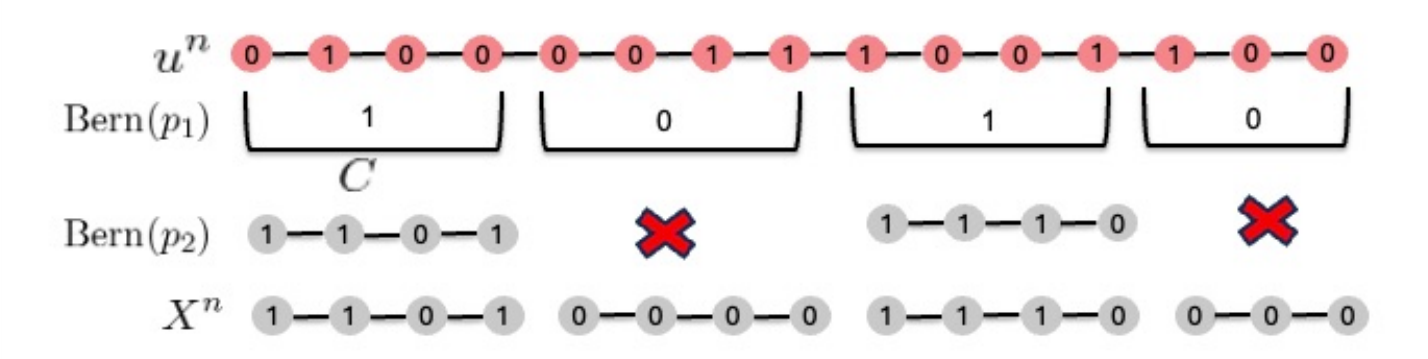}
\caption{
 Grouped testing which selects groups with Bern\((p_1)\). Within selected groups, each item is selected with Ber\((p_2)\).}\label{fig:testing}
\end{figure}


We employ a \emph{per-item} decoding rule that determines the infection status of each item \(i\) using two indicators: an intermediate estimate \(\tilde{u}_i\) to flag potential infections and a final estimate \(\hat{u}_i\). 

The decoding process consists of two steps:
For each item \(i\), if there exists any test \(t\) where the item was included (\(X_{t,i} = 1\)) and the test result was negative (\(Y_t = 0\)), the item is not infected. 
In this case, set \(\tilde{u}_i = 0\). 
If no such test exists, set \(\tilde{u}_i = 1\), i.e.,
\[
    \tilde{u}_i = \begin{cases}
    0, & \text{if } \exists\, t \text{ such that } (X_{t,i} = 1, Y_t = 0) \\
    1, & \text{otherwise.}
    \end{cases}
\]
If \(\tilde{u}_i = 0\), the item is non-infected, so set \(\hat{u}_i = 0\). 
If \(\tilde{u}_i = 1\), compute the total number of tests in which item \(i\) participates:
\[
    X_i := \sum_{t=1}^T X_{t,i}.
\]
Compare \(X_i\) to a threshold \(\gamma = p(1-\varepsilon)\cdot T\), where \(p := p_1 p_2\). 
This choice of \(\gamma\) will be explained in the error analysis section below.
If \(X_i \geq \gamma\), set \(\hat{u}_i = 1\), otherwise set \(\hat{u}_i = 0\), i.e.
\[
    \hat{u}_i = \begin{cases}
    1, & \text{if } X_i \ge \gamma \\
    0, & \text{otherwise.}
    \end{cases}
\]
We now proceed to analyzing the probability of error of the above testing and decoding rule. 

\subsection{Error Analysis}
Let \(\mathcal{E}_i\) denote the event that a single item \(i\) is misclassified; that is, a \emph{false positive} or \emph{false negative} event occurs. 
\begin{align}\label{eq:unionbound}
    \Pr(\mathcal{E})
    =
    \Pr\left(\bigcup_{i=1}^n \mathcal{E}_i\right)
    \stackrel{(a)}{\le}
    \sum_{i=1}^n \Pr(\mathcal{E}_i)
    \stackrel{(b)}{\le}
    n\,\Pr\left(\mathcal{E}_{\max}\right),
\end{align}
The step in (a) follows from the union bound, while in (b) we define \(\Pr(\mathcal{E}_{\max}) = \max_{i} \Pr(\mathcal{E}_{i})\). 

To analyze the probability of error, we now focus on a single event \(\mathcal{E}_{1}\).
One can verify that the analysis below is independent of the specific choice of \(i\) and applies uniformly to all error events \(\mathcal{E}_{i}\). Thus it also holds for \(\mathcal{E}_{\max}\).
Let \(V_j \in \{0,1\}\) denote the random variable that indicates if block \(B_j\) is infected (i.e., contains at least one infected item). 
The probability \(\qc\) 
that a given block is infected is
\begin{align}\label{def:qc}
    \qc
    &:= 1 - \Pr(U_1=0, U_2=0, \dots, U_C = 0) \nonumber\\
    &=
    1 - \left(1-q_n\middle)\middle(1-\alpha_n\right)^{C-1}.
\end{align}


Let \(\text{Pr}_{\rm FN}\) and \(\text{Pr}_{\rm FP}\) be the false negative and false positive probabilities for item~\(i\). Then
    \begin{align}\label{eq:errorfnfp}
        \Pr\left(\mathcal{E}_1\right)
        =
        q\text{Pr}_{\rm FN}
        +
        (1-q)\text{Pr}_{\rm FP}.
    \end{align}
A false negative error occurs when item~\(i\) is infected ($U_i=1$) but is incorrectly decoded as non-infected. This can only arise during the thresholding step, where \(\hat{u}_i = 0\) is mistakenly assigned.
This happens if the number of tests that contains the infected item falls below the threshold \(\gamma\). 
The false negative probability of error, denoted by \(\text{Pr}_{\rm FN}\), can be upper bounded. For \(\gamma < p \cdot T\), we have
\begin{align}\label{eq:FNupper}
    \text{Pr}_{\rm FN}
    =
    \Pr\left(\sum_{t=1}^{T} X_{t,1} \le \gamma \middle| U_1 = 1\right) \leq 2^{-T \cdot D(\gamma/T \| p)},
\end{align}
where \(D(a \| b)\) denotes the Kullback--Leibler (KL) divergence. This bound directly follows from the Chernoff bound. 
To see this note that although the testing design introduces correlations through block testing, each testing round is independent. 
This independence guarantees that the sequence of indicator random variables \(\{X_{t,1}\}_{t=1}^T\) are i.i.d.\ with distribution \(\text{Bern}(p)\), allowing us to apply Chernoff bound.
Setting 
\(\gamma = p(1 - \varepsilon) T\)
satisfies the condition for the above inequality. 

This justifies our initial choice of \(\gamma\) since this is the best threshold we can select to keep the false negative error low.
We now bound the probability of a false positive. 

A false positive error occurs when item~\(i\) is not infected but is decoded as infected, which requires: (1) the item passes the initial screening, i.e., in every test that includes the item, a infected item is also tested, and (2) it exceeds the threshold \(\gamma\) of tests required to set \(\hat u_i = 1\). Define the event
\begin{align*}
    \mathcal{E}_{1, \text{F.P.}}
    :=
    \left\{
      \tilde{u}_1 = 1, X_1 \geq \gamma
    \right\}.
\end{align*}

Recall that \(X_i = \sum_{t=1}^T X_{t,i}\) is the total number of tests in which item~\(i\) is included. For \(\gamma = p( 1 - \varepsilon)\,T\), by the law of total probability we have
\begin{align}\label{eq:FPupper}
    &\text{Pr}_{\rm FP}
    = 
    \Pr\left(\mathcal{E}_{1, \text{F.P.}} \middle| U_1=0\right) \nonumber \\
    &=
    \sum_{t=(p_1 - \varepsilon)T}^T
    \Pr\left(\tilde{u}_1 = 1 \middle|\, X_1 = t, U_1=0\right)
    \Pr\left(X_1=t\right) \nonumber \\
    &\stackrel{(a)}{\le}
    \Pr\left(\tilde{u}_1 = 1 \middle| X_1 = p(1 - \varepsilon)T,U_1=0\right),
\end{align}
where \((a)\) uses the first part of Lemma~\ref{lem:non-increasing} which states that \(\Pr\left(\tilde{u}_1 = 1 \middle| X_1 = t,U_1=0\right)\) is a non-increasing function of \(t\). 
 Intuitively, a non-infected item tested more frequently is less likely to be mistakenly identified as infected during the first decoding step, since it is more likely to appear in tests that are negative. 
We establish this in Lemma~\ref{lem:non-increasing}. This lemma is proved in Appendix~A of the longer version of this paper \cite{markovian_group_testing}.

\begin{restatable}{lemma}{FP}\label{lem:non-increasing}
For fixed \(p_1, p_2, C\), the function
\begin{align*}
    f(\gamma) 
    :=
    \Pr\left(\tilde{u}_1 = 1 \middle| X_1 = \gamma, U_1 = 0\right)
\end{align*}
is non-increasing with respect to \(\gamma\). Furthermore, \(f(\gamma)\) is upper bounded by
\begin{align*}
f(\gamma) \le \frac{1}{(1-q)^\gamma} \left[1 - (1-p_1)^{\qc\left(n/C-1\right)} (1-p_2)^{q(C-1)}\right]^\gamma,
\end{align*}
where \(\bar{p}_i := 1 - p_i\) for \(i = 1, 2\), and \(\qc\) is as defined in \eqref{def:qc}.
\end{restatable}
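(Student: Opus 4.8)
The plan is to reduce $f(\gamma)$ to a single clean expression and then treat the two claims separately. Since the test matrix $\mathsf{X}$ is drawn independently of $U^n$, conditioning on $\{X_1=\gamma\}$ does not change the law of $U^n$; and conditioned on $U^n$ and on which $\gamma$ rows contain item $1$, those rows are mutually independent (distinct rows use independent $W$'s and $Z$'s). In each such row block $B_1$ is already selected, so the row is \emph{positive} exactly when some infected item other than item $1$ is included, an event whose conditional probability $\pi(U^n)$ is the same across all $\gamma$ rows. Writing $a_1$ for the number of infected items in $B_1\setminus\{1\}$ and $a_m$ for the number of infected items in $B_m$, a direct computation gives
\[
\pi(U^n)=1-(1-p_2)^{a_1}\prod_{m\ge 2}\bigl[(1-p_1)+p_1(1-p_2)^{a_m}\bigr].
\]
Hence $f(\gamma)=\E_{U^n\mid U_1=0}\!\left[\pi(U^n)^{\gamma}\right]$, and this is the representation I would build everything on.

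The monotonicity claim is then immediate: because $\pi(U^n)\in[0,1]$ pointwise, the map $\gamma\mapsto \pi(U^n)^{\gamma}$ is non-increasing for every realization of $U^n$, and taking the expectation over $U^n$ preserves monotonicity. So $f$ is non-increasing in $\gamma$ with no further work.

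For the upper bound I would first control the single-test positivity $\E[\pi]=1-\E\bigl[(1-p_2)^{a_1}\prod_{m\ge2}(\cdots)\bigr]$. For a fixed block, $x\mapsto(1-p_2)^{x}$ and $x\mapsto(1-p_1)^{x}$ are convex, so Jensen gives $\E[(1-p_2)^{a_1}]\ge(1-p_2)^{q(C-1)}$ and, per other block, $\E[(1-p_1)+p_1(1-p_2)^{a_m}]\ge 1-p_1\qc\ge(1-p_1)^{\qc}$, where $\qc=\Pr(a_m\ge1)$ is the block-infection probability of~\eqref{def:qc}. To combine the blocks into a product I would invoke the positive association (FKG) of the Markov chain: each factor is a decreasing function of the infection, so $\E\bigl[h_1\prod_{m\ge2}h_m\bigr]\ge\E[h_1]\prod_{m\ge2}\E[h_m]\ge(1-p_1)^{\qc(n/C-1)}(1-p_2)^{q(C-1)}$. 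This yields $\E[\pi]\le 1-(1-p_1)^{\qc(n/C-1)}(1-p_2)^{q(C-1)}=:B$, which is exactly the bracketed quantity in the statement.

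The remaining, and genuinely hard, step is to pass from a bound on $\E[\pi]$ to a bound on $\E[\pi^{\gamma}]$. The difficulty is structural: the $\gamma$ tests share the single infection vector $U^n$, so their positivity events are \emph{positively} correlated, and Jensen runs the wrong way, giving only $\E[\pi^\gamma]\ge(\E\pi)^\gamma$. This is precisely where the factor $(1-q)^{-\gamma}$ must be paid. I would handle it by decoupling the $\gamma$ co-occurring tests: replace the shared $U^n$ by $\gamma$ independent copies drawn from the stationary chain conditioned on $U_1=0$, absorbing the resulting likelihood-ratio cost---one factor of $(1-q)^{-1}$ per test, reflecting the conditioning on the non-infected state---into the prefactor. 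Under the decoupled model the joint positivity factorizes into $\gamma$ identical single-test probabilities, each at most $B$, giving $f(\gamma)\le(1-q)^{-\gamma}B^{\gamma}$. Making this change-of-measure argument rigorous---verifying that the per-test cost is exactly $(1-q)^{-1}$ and that the Markov dependence \emph{across} block boundaries does not spoil the per-block Jensen and FKG steps---is the main obstacle, and is where I would concentrate the careful work.
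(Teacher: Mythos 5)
Your setup is the right one, and it is in fact more careful than the paper's own proof. You correctly observe that the $\gamma$ tests containing item $1$ are independent only \emph{conditionally on} $U^n$, so that
\[
f(\gamma)=\E\!\left[\pi(U^n)^{\gamma}\,\middle|\,U_1=0\right],
\qquad
\pi(U^n)=1-(1-p_2)^{a_1}\prod_{m\ge 2}\bigl[(1-p_1)+p_1(1-p_2)^{a_m}\bigr].
\]
The paper instead asserts that the outcomes are independent given only $(X_1=\gamma,U_1=0)$ and writes $f(\gamma)=p_{\mathrm{FP}}^{\gamma}$ with $p_{\mathrm{FP}}=\E[g(U^n)\mid U_1=0]$; since the tests share the single vector $U^n$, that step exchanges an expectation with a $\gamma$-th power, and Jensen gives the \emph{opposite} inequality $\E[\pi^\gamma]\ge(\E[\pi])^\gamma$. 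Your monotonicity argument is complete and correct, and your bound $\E[\pi]\le B$ (per-block Jensen plus FKG, the latter valid because the chain is monotone, $1-\beta\ge\alpha_n$ for large $n$) matches the paper's mean estimate.

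The gap is your final step, and it is not merely unfinished: it cannot be completed, because the inequality it targets, $\E[\pi^\gamma\mid U_1=0]\le (1-q)^{-\gamma}B^\gamma$ (i.e., the lemma's bound), is false for the parameter values under which the lemma is invoked. With $p_2=1-1/n$ and $p_1=\nu C/(n\qc)$, consider the event that block $B_1$ contains an infected item other than item $1$; given $U_1=0$ this has probability $1-(1-\alpha_n)^{C-1}\approx (C-1)\alpha_n$, and on it $\pi\ge 1-(1-p_2)=p_2$. Hence for every $\gamma=o(n)$,
\[
f(\gamma)\;\ge\;(C-1)\alpha_n\,(1-1/n)^{\gamma}\;=\;\Theta\!\left(\frac{\log n}{n}\right),
\]
a floor that does not decay with $\gamma$. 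The claimed bound, by contrast, evaluates to $\bigl(1-e^{-\nu}+o(1)\bigr)^{\gamma}(1+o(1))$, which at the value $\gamma=p(1-\varepsilon)T$ used in Lemma~2 equals $n^{-(1+\delta)}=o(\log n/n)$ --- strictly below the floor. The discrepancy is driven by the upper tail of $\pi$ (namely $\pi\approx 1$ when item $1$'s own block is infected), and no change of measure with a constant per-test cost can absorb it: here $(1-q)^{-\gamma}=1+o(1)$ since $q\gamma\to 0$. So your decoupling plan fails for a structural reason; and because your representation of $f(\gamma)$ is the correct one, what your analysis really exposes is that the lemma as stated (and the paper's proof of it, which rests on the erroneous independence claim) does not hold for these parameters. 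The event that an uninfected item shares a block with an infected one is a genuine false-positive mechanism of this scheme when $p_2\to 1$, and it must be handled as a separate term rather than absorbed into a bound of the form $\rho^{\gamma}$.
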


Applying Lemma~\ref{lem:non-increasing} with \(\gamma = p(1 - \varepsilon) T\), we upper bound \eqref{eq:FPupper} as
\begin{align}\label{eq:finalFPupper}
    \Pr\left(\tilde{u}_1 = 1 \middle| X_1 = p(1 - \epsilon) T, U_1 = 0\right)
    \le \left(r_n\right)^{p(1 - \varepsilon)T},
\end{align}
where
\[
r_n := \left[1 - (1-p_1)^{\qc(n/C-1)} (1-p_2)^{q(C-1)}\middle]\right/(1-q).
\]

To ensure a vanishing total probability of error, we analyze the minimum number of tests \(T\) required. 
From the union bound \eqref{eq:unionbound} and \eqref{eq:errorfnfp}, we have
\begin{align}\label{eq:unionerrorn}
\Pr(\mathcal{E}) \leq nq \text{Pr}_{\rm FN} + n(1 - q)\text{Pr}_{\rm FP}.
\end{align}

Using the bound for \(\Pr_{\rm FN}\) from \eqref{eq:FNupper} and the bound for \(\Pr_{\rm FP}\) from \eqref{eq:finalFPupper}, and substituting \(q = k\log n/n\), we obtain
\begin{align}\label{eq:finalerrorbd}
\Pr(\mathcal{E}) \leq (k\log n) 2^{-T \cdot D(p(1 - \varepsilon)\|p)} + n r_n^{p(1 - \varepsilon)T}.
\end{align}

To ensure that \(\Pr(\mathcal{E}) \to 0\) as \(n \to \infty\), we must appropriately choose the parameters \(p_1\), \(p_2\) and \(T\).
The following lemma establishes a testing rate \(\tau\) above which \eqref{eq:finalerrorbd} vanishes asymptotically. This lemma is proved in Appendix~B of the longer version of this paper \cite{markovian_group_testing}.

\begin{restatable}{lemma}{finaltest}\label{lem:finaltestbound}
Let the parameters \(p_1\), \(p_2\) and \(T\) be set as
\[
p_1 = \frac{\nu C}{n\qc}, \quad p_2 = 1 - 1/n, \quad T = \tau \cdot (nq_n\log{n}),
\] 
where \(n\) is to be large enough for \(p_1 < 1\). 
Then, for any \(\tau\) such that
\[
\tau > -\frac{\beta}{\nu \log(1 - \exp(-\nu))} + \frac{k^{\prime} - k}{C\nu \log(1 - \exp(-\nu))}
\]
the probability of error satisfies \(\Pr(\mathcal{E}) \to 0\) as \(n \to \infty\).
\end{restatable}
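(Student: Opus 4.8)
The plan is to start from the union bound \eqref{eq:finalerrorbd}, which controls $\Pr(\mathcal E)$ by the sum of a false-negative term $(k\log n)\,2^{-T D(p(1-\varepsilon)\|p)}$ and a false-positive term $n\,r_n^{\,p(1-\varepsilon)T}$, and to show that, with the stated choice of $p_1,p_2,T$, each term tends to $0$ exactly when $\tau$ exceeds the claimed threshold. First I would collect the asymptotics forced by $q_n = k\log n/n$ and $\alpha_n = k'\log n/n$: expanding \eqref{def:qc} gives $\qc = (k+(C-1)k')\tfrac{\log n}{n}\,(1+o(1))$, whence $p_1 = \nu C/(n\qc) = \Theta(1/\log n)\to 0$ (so $p_1<1$ for large $n$), $p=p_1p_2=p_1(1+o(1))$, and the crucial ratio $q_n/\qc \to k/\bigl(k+(C-1)k'\bigr)$. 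These feed directly into the exponents $T D(p(1-\varepsilon)\|p)$ and $p(1-\varepsilon)T$ that govern the two terms.

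For the false-negative term I would use that the $T$ rows of $\mathsf X$ are generated independently, so $\{X_{t,1}\}_t$ is i.i.d.\ $\mathrm{Bern}(p)$ and the Chernoff bound \eqref{eq:FNupper} applies. Expanding the binary divergence for small $p$ gives $D(p(1-\varepsilon)\|p)=c(\varepsilon)\,p\,(1+o(1))$ with $c(\varepsilon)>0$ for $\varepsilon\in(0,1)$. Since $pT=\Theta(\log n)$, the exponent $T D(p(1-\varepsilon)\|p)$ is a positive constant times $\log n$, so the false-negative term decays polynomially in $n$ for \emph{every} $\tau>0$. Hence this term is never binding, and the entire constraint on $\tau$ comes from the false-positive term.

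The core step is the limit of $r_n$. Writing $(1-p_1)^{\qc(n/C-1)}=\exp\{\qc(n/C-1)\ln(1-p_1)\}$ and using $p_1=\nu C/(n\qc)$, the exponent equals $-\nu(1+o(1))$, so this factor tends to $\exp(-\nu)$; this is precisely the coarse-block survival probability that the scaling of $p_1$ is engineered to produce. Meanwhile $(1-p_2)^{q(C-1)}=\exp\{-q(C-1)\ln n\}\to 1$ because $q(C-1)\ln n=\Theta((\log n)^2/n)\to 0$, and $1-q\to1$, so $r_n\to 1-\exp(-\nu)$ and $\log r_n\to\log\bigl(1-\exp(-\nu)\bigr)<0$. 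The decisive observation is that both $\log n$ and $p(1-\varepsilon)T\log r_n$ are of order $\log n$; dividing $\log\bigl(n\,r_n^{\,p(1-\varepsilon)T}\bigr)$ by $\log n$, the false-positive term vanishes iff
\begin{align*}
1 + (1-\varepsilon)\,\nu C\tau\,\frac{k}{k+(C-1)k'}\,\log\!\bigl(1-\exp(-\nu)\bigr) < 0 .
\end{align*}
Solving this inequality for $\tau$ and letting $\varepsilon\downarrow 0$ yields the threshold in the statement (a leading $\beta$-term, via $k'/k=\beta$, plus a finite-block correction of order $1/C$).

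The step I expect to be the main obstacle is the bookkeeping in this last reduction: because $p(1-\varepsilon)T$ grows like $\log n$, I must verify that every $o(1)$ correction to $\log r_n$ contributes only $o(\log n)$ to the exponent and therefore cannot shift the threshold. Concretely I would check that the $O(1/\log n)$ correction to $(1-p_1)^{\qc(n/C-1)}$ (from the quadratic term of $\ln(1-p_1)$ and from $\qc\to 0$), the $O((\log n)^2/n)$ correction from the $p_2$ factor, and the $O(\log n/n)$ correction from $1/(1-q)$ each enter $\log\bigl(n\,r_n^{\,p(1-\varepsilon)T}\bigr)$ only at order $O(1)$. Once these are controlled, the lemma follows, and Theorem~\ref{thm:achievability} is obtained by optimizing the free parameters: taking $C\to\infty$ removes the $1/C$ correction, and maximizing $-\nu\log\bigl(1-\exp(-\nu)\bigr)$ over $\nu$ is attained at $\nu=\ln 2$, giving $\tau^*\le\beta/\ln 2$.
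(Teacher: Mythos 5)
Your proposal is correct and follows essentially the same route as the paper's own proof: the same decomposition of \eqref{eq:finalerrorbd} into a never-binding false-negative term (exponent $T\cdot D(p(1-\varepsilon)\|p)=\Theta(\log n)$) and a binding false-positive term, the same limits $n\qc/(C\log n)\to (k+(C-1)k')/C$ and $r_n\to 1-\exp(-\nu)$, and the same resulting threshold on $\tau$. The only cosmetic differences are that you fix $T=\tau\, nq_n\log n$ and solve the resulting inequality for $\tau$, whereas the paper defines $T$ to kill the false-positive term and then computes $\lim T(n)/(nq_n\log n)$, and that you control the KL divergence by its first-order expansion $D(p(1-\varepsilon)\|p)=c(\varepsilon)p(1+o(1))$ while the paper uses the quadratic lower bound $D(x\|y)\ge (x-y)^2/(2y)$; both give the same conclusion (and your derived correction term $\frac{k'-k}{Ck\,\nu\log(1-\exp(-\nu))}$ agrees with what the paper's proof actually produces, matching the lemma statement up to the paper's own typo-level discrepancy in that $O(1/C)$ term).
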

Lemma~\ref{lem:finaltestbound} implies that any
\[
\tau = -\frac{\beta}{\nu \log(1 - \exp(-\nu))} + \frac{k^{\prime} - k}{C\nu \log(1 - \exp(-\nu))},
\]
is achievable.
To obtain the sharp bound in Theorem~\ref{thm:achievability}, we optimize the term \(\nu \log(1 - \exp(-\nu))\). 
It can be shown that this expression achieves its maximum at \(\nu = \ln 2\). 
Substituting this value of \(\nu\), and letting \(C \to \infty\) the bound simplifies to
\[
\tau = \frac{\beta}{\ln{2}},
\]
is achievable. Therefore the minimal testing rate \(\tau^*\) satisfies
\[
\tau^* \leq \frac{\beta}{\ln{2}} \approx 1.44\beta.
\]
This proves the result in Theorem~\ref{thm:achievability}.

\section{Conclusion}\label{sec:Conclusion}
In this work, we introduced a novel Markovian correlation model for correlated group testing.
We proposed a novel \emph{per-item} testing and decoding strategy for non-adaptive group testing in the presence of correlated infection patterns. 

Future work could extend this framework to more general correlation structures, such as higher-order Markov chains or graphical models, to capture long-range dependencies and complex item relationships. Exploring joint decoding strategies, which consider the infection status of all items simultaneously, could further reduce the number of required tests. Additionally, adapting this approach to the sparse regime, where infections scale sub-linearly with population size \((nq_n \sim O(n^\theta), \theta < 1)\), is another important direction to consider.


\newpage

{\footnotesize
\bibliographystyle{ieeetr}
\bibliography{Ref.bib}
}

\newpage
\appendices
\section{Proof of Lemma~\ref{lem:non-increasing}}
\FP*

\begin{proof}
We first establish that \(f(\gamma)\) is non-increasing with respect to \(\gamma\). By definition,
\begin{align*}
    f(\gamma) 
    :=
    \Pr\left(\tilde{u}_1 = 1 \middle| X_1 = \gamma, U_1 = 0\right).
\end{align*}
Recall that a false positive occurs when an uninfected item is incorrectly identified as infected. 
This error can only happen if all tests involving item \(1\) return a positive outcome. 
Conditioning on \(X_1 = \gamma\), meaning item \(1\) participates in exactly \(\gamma\) tests, we can rewrite \(f(\gamma)\) as
\begin{align*}
    f(\gamma) 
    &= \Pr\left(Y_t = 1 \,\, \forall\, t \text{ such that } X_{t,1} = 1 \middle| X_1 = \gamma, U_1 = 0\right).
\end{align*}

Due to the independence of test outcomes across different tests, we can further simplify \(f(\gamma)\) as
\begin{align*}
    f(\gamma) 
    &= \prod_{t: X_{t,1} = 1} \Pr\left(Y_t = 1 \middle|X_{t,1} = 1, U_1 = 0\right).
\end{align*}

Since the probability \(\Pr\left(Y_t = 1 \middle|X_{t,1} = 1, U_1 = 0\right)\) is identical across all tests that include item \(1\), we denote this common probability as \(p_{\text{FP}} := \Pr\left(Y_1 = 1 \middle| X_{1,1} = 1, U_1 = 0\right)\). 
Thus, we have
\begin{align}\label{eq:pfp}
    f(\gamma) = \left(p_{\text{FP}}\right)^\gamma.
\end{align}
Observe that \(0 \leq p_{\text{FP}} \leq 1\). Therefore, \(f(\gamma)\) is non-increasing in \(\gamma\). This completes the proof of the first part of the lemma.

The second part of the lemma, i.e., establishing the upper bound, is more involved. Recall that the infection vector is
\[
U^n = (U_1, U_2, \dots, U_n).
\]
Further define the block infection vector as
\[
V^n = (V_1, V_2, \dots, V_{n/C}),
\]
where each random variable \(V_j \in \{0,1\}\) indicates whether at least one item in block \(j\) is infected. Note that \(V^n\) is a deterministic function of \(U^n\). 
We begin with 
\[
p_{\mathrm{FP}} = \Pr\left(Y_1 = 1 \middle| X_{1,1} = 1, U_1 = 0\right).
\]
We apply the law of total probability with respect to the random vector \(U^n\), which captures the infection status of all the blocks:
\begin{align*}
p_{\rm FP} &= \sum_{u^n} \Pr\left(Y_1 = 1\middle| X_{1,1} = 1, U_1 = 0, u^n\right) \Pr\left(u^n \middle| U_1 = 0\right) \\
&= \mathbb{E}_{U^n}\left[\Pr\left(Y_1 = 1 \middle| X_{1,1} = 1, U_1 = 0, U^n\right) \middle| U_1 = 0\right].
\end{align*}
To simplify notation, define the function
\[
g(U^n) := \Pr\left(Y_1 = 1 \middle| X_{1,1} = 1, U_1 = 0, U^n\right).
\]
Hence
\begin{align} \label{eq:g_cond}
\Pr\left(Y_1 = 1 \middle| X_{1,1} = 1, U_1 = 0\right) = \mathbb{E}_{U_2^n}\left[g(U^n) \middle| U_1 = 0\right],
\end{align}
where $U_2^n = (U_2,\dots,U_n)$.
We now remove the conditioning on \(U_1 = 0\) to make the analysis simpler. Using the law of total probability, we can write the unconditional expectation \(E[g(U^n)]\) as
\[
\mathbb{E}\left[g(U^n)\right] = q \mathbb{E}\left[g(U^n) \middle| U_1 = 1\right] + (1-q) \mathbb{E}\left[g(U^n) \middle| U_1 = 0\right],
\]
where \(q \) denotes the probability that the first item is infected. Rearranging the terms and using the fact that  \(g(\cdot) \geq 0\), we obtain
\[
\mathbb{E}\left[g(U^n) \middle| U_1 = 0\right] \le \frac{1}{1-q} \mathbb{E}\left[g(U^n)\right].
\]
Hence,
\[
\Pr\left(Y_1 = 1 \middle| X_{1,1} = 1, U_1 = 0\right) \le \frac{1}{1-q} \mathbb{E}\left[g(U^n)\right].
\]

We now look closely at 
\[
g(U^n) = 1 - \Pr\left(Y_1 = 0 \middle| X_{1,1} = 1, U_1 = 0, U^n\right).
\]
In particular, \(Y_1 = 0\) occurs if no infected item in the pool contributes to a positive test outcome. 
For each block that item \(1\) does not belong to (namely \(B_2, B_3, \dots, B_{C}\)), the event \(\{Y_1 = 0\}\) can happen in one of the following scenarios:
\begin{enumerate}
    \item The block is not infected. (In this case, it cannot contribute to a positive test).
    \item The block is infected but is not selected for testing, which happens with probability \((1-p_1)\).
    \item The block is infected and is selected for testing (which occurs with probability \(p_1\)), but none of its infected items are ultimately tested. This last scenario happens with probability \(\left(1 - p_2\right)^{\sum_{j=1}^{C} U_j'}\), where \(U_j'\) indicates whether a particular item in that block is infected, given that the block is infected.
\end{enumerate}
Hence, if a block is infected, the probability it does not contribute to a positive test is
\[
(1-p_1) + p_1\left(1-p_2\right)^{\sum_{j=1}^{C} U_j'} \ge (1-p_1).
\]
Discarding the second term in the above equation helps simplify the analsyis further.  

Suppose there are \(\sum_{j=2}^{n/C} V_j\) infected blocks among those that do not contain item 1. 
The probability that all those infected blocks fail to produce a positive test is (recall that \(V_j\) is a deterministic function of \(U_j\))
\[
\left((1-p_1) + p_1\left(1-p_2\right)^{\sum_{j=1}^{C} U_j'}\right)^{\sum_{j=2}^{n/C} V_j} \ge (1-p_1)^{\sum_{j=2}^{n/C} V_j}.
\]

Within the block containing item 1, the probability that none of its other infected items are included in the test becomes
\[
\left(1 - p_2\right)^{\sum_{j=2}^{C} U_j}.
\]
Putting these pieces together, \(g(U^n)\) can be upper bounded by
\begin{align*}
&g(U^n) \leq 1 - (1-p_1)^{\sum_{j=2}^{n/C} V_j}\left(1-p_2\right)^{\sum_{j=2}^{C} U_j}.
\end{align*}
Collecting all steps, we arrive at the bound
\[
\mathbb{E}[g(U^n)] \le \frac{1}{1-q}\mathbb{E}\left[1 - (1-p_1)^{\sum_{j=2}^{n/C} V_j}(1-p_2)^{\sum_{j=2}^{C} U_j}\right].
\]
Now define
\[
A :=\left(\sum_{j=2}^{n/C} V_j\right) \log{(1-p_1)} + \left(\sum_{j=2}^{C} U_j\right)\log{(1-p_2)}
\]
and
\begin{align*}
&\Phi(A) :=1 - 2^A.
\end{align*}
Now we claim \(\Phi(A)\) is concave in \(A\). To verify this, note that \(2^A\) is an exponential function and is convex in \(A\) for all \(A\). 
Since \(\Phi(A)\) is defined as \(1 - 2^A\), the result is a concave function.
Applying Jensen's inequality to the concave function \(\Phi(A)\), we have
\[
\mathbb{E}\left[\Phi(A)\right] \le \Phi\left(\mathbb{E}[A]\right).
\]
Now, note that 
\[
\mathbb{E}[A] = \qc\left(\frac{n}{C}-1\right)\log{(1-p_1)} + q(C-1)\log{(1-p_2)},
\] 
where we recall that \(\qc = \Pr(V_j = 1)\) is the probability that a block \(j\) is infected. Substituting this into the inequality gives
\begin{align*}
&\mathbb{E}\left[1 - 2^A\right] 
\le 1 - (1-p_1)^{\qc\left(\frac{n}{C}-1\right)} (1-p_2)^{ q(C-1)}.
\end{align*}
Combining this with the earlier inequality, we obtain
\[
g(U^n) \le \frac{1}{1-q} \left[1 - (1-p_1)^{\qc\left(\frac{n}{C}-1\right)} (1-p_2)^{ q(C-1)}\right].
\]
Finally, recall that from (\ref{eq:pfp}) and \eqref{eq:g_cond},

\[
f(\gamma) = \mathbb{E}\left[g(U^n) \middle| U_1 = 0\right]^\gamma.
\]
Substituting the bound for \( \mathbb{E}\left[g(U^n) \middle| U_1 = 0\right]\), we find
\[
f(\gamma) \le \left(\frac{1}{1-q} \left[1 - (1-p_1)^{\qc\left(\frac{n}{C}-1\right)} (1-p_2)^{q(C-1)}\right]\right)^\gamma.
\]
Expanding this gives the final result
\[
f(\gamma) \le \frac{1}{(1-q)^\gamma} \left[1 - (1-p_1)^{\qc\left(n/C-1\right)} (1-p_2)^{q(C-1)}\right]^\gamma.
\]

\end{proof}
\section{Proof of Lemma~\ref{lem:finaltestbound}}
\finaltest*

\begin{proof}
From \eqref{eq:unionerrorn}, the total probability of error \(\Pr(\mathcal{E})\) is bounded as
\begin{align*}
\Pr(\mathcal{E}) &\leq (k \log n) 2^{-T \cdot D(p(1 - \varepsilon) \| p)} + n r_n^{p(1 - \varepsilon) T}  \\
&= (k \log n) 2^{-T \cdot D(p(1 - \varepsilon) \| p)} + 2^{\log{n} + p(1 - \varepsilon) T\log{r_n}},
\end{align*}
where \(r_n\) is defined as
\[
r_n = \frac{1 - (1 - p_1)^{\qc (n / C - 1)} (1 - p_2)^{q(C-1)}}{1 - q}.
\]

We first focus on the second term \(n r_n^{p(1 - \varepsilon) T}\). 
For this term to vanish as \(n \to \infty\), we can set
\[
T = \frac{-(1 + \delta) \log n}{p(1 - \varepsilon) \log r_n},
\]
for some \(\delta > 0\). We now focus on the limit 
\[\lim_{n \to \infty} T(n)/ (nq_n\log{n}) = \lim_{n \to \infty} T(n)/ (k\log^2{n})\]
to show the required bound. Note that 
\[
\lim_{n \to \infty} T(n)/ k\log^2{n} =-\frac{1+\delta}{k(1-\varepsilon)}\left( \lim_{n \to \infty} \frac{ 1}{p\log{n}}\right) \left(\lim_{n \to \infty} \frac{1}{\log r_n}\right)
\]
By definition, \(\qc := 1- (1-q)(1-\alpha_n)^{C - 1}\), \(p_1 = \frac{\nu C}{n \qc}\) and \(p_2 = 1 - 1/n\). Substituting this in the first limit after the constant factor we get
\[
 \frac{ 1}{p\log{n}} = \frac{n \cdot (1- (1-q)(1-\alpha_n)^{C - 1})}{\nu C (1 - 1/n) \log{n}}.
\]
Recall that \(\alpha_n = \frac{k' \log n}{n}\) and \(q = \frac{k \log n}{n}\). We now evaluate the limit
\begin{align}\label{eq:first}
&\lim_{n \to \infty}\frac{n(1 - (1-q)(1 - \alpha_n)^{C - 1})}{C(1 - 1/n)\log{n}} \nonumber \\
&\stackrel{(a)}{=}  \lim_{n \to \infty}\frac{n(1 - (1-q)(1 - (C-1)\alpha_n + \binom{C-1}{2}\alpha_n^2 \dots))}{C\log{n}} \nonumber \\
&\stackrel{(b)}{=} \lim_{n \to \infty} \frac{n((C-1)\alpha_n - q_n + o(\alpha_n))}{C\log{n}}\nonumber \\
&= \lim_{n \to \infty} \frac{n\alpha_n}{\log{n}} +  \lim_{n \to \infty} \frac{nq_n - n\alpha_n}{C\log{n}}\nonumber \\
&= k^{\prime} + \frac{k - k^{\prime}}{C},
\end{align}
where (a) is obtained by applying Taylor's expansion on \((1 - \alpha_n)^{C - 1}\) and noting that \(1/n \to 0\) as \(n \to \infty\), and (b) is due to the fact that \(\alpha_n C \sim \log{n}/n\), which means that higher powers of the Taylor's  series reduce in order.

We now consider the second limit
\(
\lim_{n \to \infty} \frac{1}{\log r_n}.
\)
Examine the limit of the numerator of \(r_n\) which is
\[1 - (1 - p_1)^{\qc (n / C - 1)} (1 - p_2)^{q(C-1)}.\]
First, note that as \(n \to \infty\),
\[
(1 - p_1)^{\qc (n / C - 1)} \sim \exp(-(\nu C/ n \qc)\times \qc(n/C)) \to \exp{(-\nu)},\]
and
\[(1 - p_2)^{q(C-1)} = \left(\frac{1}{n}\right)^{(C-1)k\log{n}/n} \sim \exp(-\log^2{n}/n)\to 1.
\]
Thus,
\[
1 - (1 - p_1)^{n \qc / C_n} (1 - p_2)^{C_n} \to 1 - e^{-\nu}.
\]
For the denominator, note that \(q = \frac{k \log n}{n}\), so \(1 - q \to 1\) as \(n \to \infty\). Therefore,
\[
r_n \to 1 - e^{-\nu}.
\]
Taking the logarithm,
\begin{align}\label{eq:secondlimit}
\log r_n \to \log(1 - e^{-\nu}).
\end{align}
Combining the limits \eqref{eq:first} and \eqref{eq:secondlimit},
we conclude that
\begin{align}\label{eq:limittests}
\lim_{n \to \infty } T(n)/k\log^2{n} =  -\frac{(1+\delta)(k^{\prime} + (k - k^{\prime})/C)}{k(1-\varepsilon)\nu \log(1 - \exp(-\nu))}.
\end{align}

Now we consider the first term in \eqref{eq:unionerrorn}. To show that 
\[
(k \log n)\,2^{-T \cdot D(p(1-\varepsilon) \| p)} \to 0,
\]
we start by stating the following bound
\begin{align}\label{eq:KLinequality}
(k \log n)\,2^{-T \cdot D(p(1-\varepsilon) \| p)} \leq (k \log n)\,2^{-T \cdot p\varepsilon^2/2}.
\end{align}

This inequality follows from the bound (when \(x < y\))
\begin{align}\label{eq:KLinequalityxy}
D(x \| y) \geq \frac{(x-y)^2}{2y},
\end{align}
where \(x =p(1-\epsilon)\) and \(y = p\). To prove \eqref{eq:KLinequalityxy}, expand \(D(x \| y)\) via a Taylor series around \(x = y\) and treat it as a function of \(x\). Let \(h(x) := D(x \| y)\). Then
\[
\frac{\partial}{\partial x} h(x) = \ln{2}\left(\log \frac{x}{y} - \log \frac{1-x}{1-y}\right),
\]
\[
\frac{\partial^2}{\partial x^2}h(x) = (\ln{2})^2\left(\frac{1}{x} + \frac{1}{1-x}\right).
\]
Using Taylor's expansion, expand \(h(x)\) around \(x = y\)
\[
h(x) = h(y) + \frac{\partial}{\partial x} h(y) \cdot (x-y) + \frac{\partial^2}{\partial x^2} h(\xi) \cdot \frac{(x-y)^2}{2},
\]
where \(\xi \in [x, y]\).
Since \(h(y) = 0\) and \(\frac{\partial}{\partial x} h(y) = 0\), we have
\[
h(x) = \frac{\partial^2}{\partial x^2} h(\xi) \cdot \frac{(x-y)^2}{2}.
\]
For \(0 \leq \xi \leq y \leq 1\), it follows that
\[
\frac{\partial^2}{\partial x^2} h(\xi) = \frac{1}{\xi} + \frac{1}{1-\xi} \geq \frac{1}{\xi} \geq \frac{1}{y}.
\]
Thus, when \(x > y\)
\[
D(x \| y) \geq \frac{1}{2} \cdot \frac{1}{y} \cdot (x-y)^2 = \frac{(x-y)^2}{2y}.
\]
Now substituting \(x = p(1-\varepsilon)\) and \(y = p\), we obtain \eqref{eq:KLinequality}.

Now in \eqref{eq:KLinequality} note that \(k \log n\) grows logarithmically in \(n\) and the exponential term 
decays as \(n \to \infty\).

From the parameter choices
\[
T \sim O(\log^2 n), \quad p = \frac{\nu C}{nq_n} \sim O\left(\frac{1}{\log n}\right),
\]
it follows that
\[
T \cdot p \epsilon^2 \sim O(\log n).
\]

Thus, we can conclude that
\begin{align}\label{eq:firsttermlimit}
(k \log n) \cdot 2^{-T \cdot D(p(1-\varepsilon) \| p)} \sim O\left(\frac{k \log n}{n}\right) \to 0
\end{align}
as \(n \to \infty\).

From \eqref{eq:limittests} and \eqref{eq:firsttermlimit}, we have shown when
\[
\lim_{n \to \infty} \frac{T(n)}{nq_n\log{n}} = -\frac{(1+\delta)(k^{\prime} + (k - k^{\prime})/C)}{k(1-\varepsilon)\nu \log(1 - \exp(-\nu))},
\]

the total probability of error from \(\Pr(\mathcal{E}) \to 0\) as \(n \to \infty\)

Recall from Section~\ref{sec:ProblemSetting} that \(k^{\prime} = \beta n q_n\), 
we can further claim (letting \(\delta, \varepsilon \to 0\)) that
\[
\tau^ > -\frac{\beta}{\nu \log(1 - \exp(-\nu))} + \frac{k- k^{\prime}}{C\nu \log(1 - \exp(-\nu))} 
\]
suffices for \(\Pr(\mathcal{E}) \to 0\) as \(n \to \infty\).
\end{proof}
\section{Proof of Theorem~\ref{thm:converse}}

\begin{proof}
We begin by considering a lower bound on the minimal group testing rate, as derived in \cite{wolf1985born}. The result is given by:
\begin{align}\label{eq:taustar}
    \tau^{*} > \lim_{n \to \infty} \frac{1}{n q_n \log n} H(U^n),
\end{align}
where \( q_n = \frac{k \log n}{n} \) represents the infection rate, and \( H(U^n) \) is the entropy of the infection pattern \( U^n = (U_1, U_2, \ldots, U_n) \).

Under the Markovian correlation model parameterized by \(\alpha\) and \(\beta\), the entropy \(H(U^n)\) can be expanded using the chain rule for entropy:
\begin{align}
    H(U^n) &= H(U_1) + \sum_{i=2}^n H(U_i | U_{i-1}).
\end{align}

Since the Markov chain is stationary, the conditional entropy \(H(U_i | U_{i-1})\) is the same for all \(i \geq 2\), and equals \(H(U_2 | U_1)\). 
Therefore, the total entropy simplifies to:
\begin{align}
    H(U^n) &= H(U_1) + (n-1)H(U_2|U_1).
\end{align}

Next, we expand \(H(U_2|U_1)\) using the Markov transition probabilities. By definition:
\begin{align}
    H(U_2|U_1) &= q H(U_2 | U_1 = 1) + (1-q) H(U_2 | U_1 = 0),
\end{align}
where \(q\) is the stationary probability of \(U_1 = 1\). 

For the conditional entropy \(H(U_2 | U_1 = 1)\), note that \(U_2 | U_1 = 1 \sim \text{Bern}(1-\beta)\). 
Hence,
\begin{align}
    H(U_2 | U_1 = 1) = -\beta \log \beta - (1-\beta) \log (1-\beta).
\end{align}
Similarly, for \(H(U_2 | U_1 = 0)\), we have \(U_2 | U_1 = 0 \sim \text{Bern}(\alpha)\). Thus,
\begin{align}
    H(U_2 | U_1 = 0) = -\alpha \log \alpha - (1-\alpha) \log (1-\alpha).
\end{align}
Substituting these results into the expression for \(H(U_2|U_1)\), we obtain
\begin{align}
    H(U_2|U_1) &= q \left(-\beta \log \beta - (1-\beta) \log (1-\beta)\right) \nonumber \\
    &\quad + (1-q) \left(-\alpha \log \alpha - (1-\alpha) \log (1-\alpha)\right).
\end{align}
For large \(n\), the infection rate \(\alpha = \frac{k^{\prime} \log n}{n}\) dominates due to its asymptotic behavior. Specifically, the leading term of \(H(U_2|U_1)\) is given by
\begin{align}
    H(U_2|U_1) &= \alpha \log \frac{1}{\alpha} + o\left(\alpha \log \frac{1}{\alpha}\right).
\end{align}
Substituting \(\alpha = \frac{k^{\prime} \log n}{n}\), we obtain
\begin{align}
    H(U_2|U_1) &= \frac{k^{\prime} \log n}{n} \log \frac{n}{k^{\prime} \log n} + o\left(\frac{\log^2 n}{n}\right) \nonumber \\
    &= \frac{k^{\prime} \log^2 n}{n} + o\left(\frac{\log^2 n}{n}\right).
\end{align}
Thus, for large \(n\), the entropy \(H(U^n)\) is approximated as
\begin{align}
    H(U^n) &= n \cdot \frac{k^{\prime} \log^2 n}{n} + o(\log^2 n) \nonumber \\
    &= k^{\prime} \log^2 n + o(\log^2 n).
\end{align}
Returning to the bound in \eqref{eq:taustar}, we substitute \(H(U^n)\) and \(n q_n = k \log n\), where \(k = \frac{k^{\prime}}{\beta}\), to obtain:
\begin{align}
    \tau^{*} &> \lim_{n \to \infty} \frac{k^{\prime} \log^2 n + o(\log^2 n)}{k \log^2 n} \nonumber \\
    &= \beta.
\end{align}
This completes the proof.
\end{proof}
\end{document}